\newtheorem{theorem}{Theorem}[section]
\newtheorem{proposition}[theorem]{Proposition}
\newtheorem{lemma}[theorem]{Lemma}
\theoremstyle{definition}
\newtheorem{remark}[theorem]{Remark}
\DeclareMathOperator{\R}{\mathbb{R}}
\DeclareMathOperator{\se}{\mathbf{c}}
\DeclareMathOperator{\diag}{diag}
\DeclareMathOperator{\HK}{HK}
\DeclareMathOperator{\im}{Im}
\DeclareMathOperator{\sign}{sign}
\newcommand\blue[1]{\textcolor{black}{#1}}
\begin{document}

\date{\today }
\title[Sign-sensitivities]{Sign-sensitivities for reaction networks: \\ an algebraic approach}
\author{Elisenda Feliu}
\address{Department of Mathematical Sciences, University of Copenhagen, \\ Universitetsparken 5, 2100 Copenhagen, Denmark}
\email{efeliu@math.ku.dk}

\maketitle

\begin{abstract}
This paper presents an algebraic framework to study \emph{sign-sensitivities} for reaction networks modeled by means of systems of ordinary differential equations. Specifically, we study the sign of the derivative of the concentrations of the species in the network at steady state with respect to a small perturbation on the parameter vector. We provide a closed formula for the derivatives that accommodates common perturbations, and illustrate its form with numerous examples. We argue that, mathematically, the study of the response to the system with respect to changes in total amounts is not well posed, and that one should rather consider perturbations with respect to the initial conditions. We find a sign-based criterion to determine, without computing the sensitivities, whether the sign depends on the steady state and parameters of the system. This is based on earlier results of so-called \emph{injective} networks. Finally, we address systems with multiple steady states and the restriction to stable steady states.

\medskip
\textbf{Keywords: } perturbation; reaction network; steady state; mass-action; systems biology; sensitivity
\end{abstract}

\section{Introduction}
One of the main challenges in molecular and systems biology is to infer mechanistic details of the processes that underlie available experimental data. A common strategy consists in applying controlled perturbations to the system of interest, and compare model predictions with gathered quantitative data. As an example, consider a simple two-component system in which a histidine kinase $E$ transfers a phosphate group to a response-regulator $S$:
\begin{align*}
E & \ce{->} E_p & S + E_p & \ce{ ->} S_p + E &  S_p&  \ce{->} S.
\end{align*}
Let us imagine a protein $I$, which forms an inhibitory complex $Y$ with the substrate by binding exactly one of the two forms, $S$ or $S_p$.  This gives rise to two rivaling inhibition models with the following additional reactions
\begin{align*}
\textrm{Model 1: }\quad S + I & \ce{ <=> }Y & \textrm{Model 2: }\quad S_p + I & \ce{ <=> }Y.
\end{align*}
In this small (and artificial) example, the measurement of the concentration of $I$ at steady state for two starting conditions that only differ slightly  in the amount of kinase $E$, helps us to qualitatively discriminate between these two models. Indeed, the derivative of the concentration of $I$ at steady state increases with $E$ in the first model, while it decreases in the second model. 

This exemplifies the basis of perturbation-based studies, in which the response of a system to an intervention, that being the addition of a protein, knockdown of a component, modification of reaction rate constants etc,  is recorded \cite{gardner,villaverde,paradoxical,Kholodenko:untangling}. 
In this paper we focus on how to predict this response given the model, such that comparison with experimental data can be in place. 

The modeling setting is based on reaction networks  and their associated evolution equations for the concentrations of the species in the network. In the examples, we employ  the mass-action assumption, although this is not required for the theoretical framework. 
We consider perturbations on the parameters of the model, these typically being either initial concentrations or total amounts, and kinetic parameters. We assume that the system is at steady state, and that a small perturbation to the parameter vector is performed. If the perturbation is small enough and the steady state is non-degenerate and \blue{stable}, then the system is expected to converge to a new steady state. Our goal is to determine, component-wise, the sign of the difference of the two steady states  as a function of the starting steady state and the parameters of the system. 
Mathematically, this translates into determining the signs of the derivatives of all concentrations at  steady state with respect to the performed perturbation; these signs are referred to as \emph{sign-sensitivities} \cite{SontagSigns}.

Numerous computer-based approaches exist to find sign-sensitivities, with the same or different modeling framework as the one we use here \cite{Kholodenko:untangling,gupta:sensitivity,vera:inferring}. A  general strategy assumes all parameters known  except the one of interest, and the response of the system is investigated using simulations. 
However, most quantities and parameters are notoriously hard to measure or estimate, which reduces the applicability of these methods. 
It is worth highlighting the algorithm by Sontag \cite{SontagSigns} to determine  sign-sensitivities with respect to the increase of a total amount (a quantity that is preserved under the dynamics of the system), while treating the rest of the parameters  as unknowns. 
Alternatively, if the network is small enough, one can attempt direct manipulation of the steady state equations \cite{Feliu:2010p94,bluthgen:feedback}.
In recent works \cite{okada:sensitivity,okada:law}, Okada and Mochizuki provide a theorem to determine zero sensitivities from network structure alone. Similarly, Brehm and Fiedler   study whether the sensitivity is zero \cite{brehm:sensitivity}. Another series of works \cite{shinar:sensitivity,shinar:sensitivity2} address the question of ``how large the absolute value of the sensitivity is'', by finding upper bounds for reaction networks in a specific class.

\medskip
In this paper we derive a closed formula for the sensitivities. When the kinetics are polynomial, then the derivative is expressed as a rational function in the parameters of the system and the concentrations of the species at the steady state. If the numerator and the denominator of this function have all coefficients with the same sign, then the sign of the derivative is easily determined, and it does not depend on the initial steady state nor on  the parameters of the system. When the signs of the coefficients differ, then one can employ standard techniques, such as those based on the Newton polytope, e.g. \cite{FeliuPlos}, to determine whether the derivative can both be positive and negative, thereby concluding that the sign depends on the chosen steady state and/or parameter values. 
For example, for the two rivaling inhibitory models above, we find that the derivative  $\se'_I$ of the concentration of $I$  with respect to the initial concentration of $E$ at a steady state $\se$ for 
model 1 and model 2 are
\begin{align*}
   \se_I' &=k_1k_2k_4 \se_{S}\se_{I}\, / \, q_1(k,\se) \quad \textrm{(model 1)} &    \se_I' &= -k_1k_2k_4 \se_{S}\se_{I}  \, / \, q_2(k,\se) \quad \textrm{(model 2)},
\end{align*}
where $k_i>0$ stands for the reaction rate constants of the reactions in the network (in the order given above), $\se_Z$ denotes the concentration of the species $Z$ at the steady state, and 
\begin{align*}
q_1(k,\se)&= k_{1}k_{2}k_{4}\se_{E_p}\se_{S}+k_{2}k_{3}k_{4}\se_{S}^{2}+k_{2}k_{3}k_{4}\se_{S}\se_{I}+k_{1}k_{2}k_{5}\se_{E_p}+k_{1}k_{3}k_{4}\se_{S}+\\ & \qquad k_{1}k_{3}k_{4}\se_{I}+k_{2}k_{3}k_{5}\se_{S}+k_{1}k_{3}k_{5}, \\
q_2(k,\se)&= k_{1}k_{2}k_{4}\se_{E_p}\se_{S_p}+k_{1}k_{2}k_{4}\se_{E_p}\se_{I}+k_{2}k_{3}k_{4}\se_{S}\se_{S_p}+k_{1}k_{2}k_{5}\se_{E_p}+k_{1}k_{3}k_{4}\se_{S_p}+k_{2}k_{3}k_{5}\se_{S}+k_{1}k_{3}k_{5}.
\end{align*}
By inspecting the signs of these polynomials at positive values of $\se$ and $k$, we conclude that model 1 leads to an increase of the concentration of $I$ while model 2 to a decrease.

In \cite{paradoxical},  apparently paradoxical results on sign-sensitivities were brought up to attention. We recover these phenomena in this work, and encounter a new surprising counter-intuitive result: the concentration of a species $X$ at steady state might decrease as a function of $X$ itself; that is, the concentration of $X$ might decrease after the addition of $X$ to the system   (Section~\ref{hybridhistine kinase:section}).

\medskip
After exemplifying how to find the sensitivities for different types of perturbations (Section~\ref{sec:sensitivities}), we focus on perturbations of concentrations. We argue that, mathematically, it is better posed to discuss responses to a change of an initial concentration rather than to a change of a total amount (Section~\ref{sec:conc}). We then employ recent results relating the sign of \blue{the determinant of the} Jacobian of a function with sign-vectors conditions  from \cite{MullerSigns} to determine, without computing the sensitivities, whether the sign depends on the steady state and parameters of the system (Section~\ref{sec:indep}). We conclude by discussing the existence of multiple steady states and the restriction to stable steady states, which are the only ones observable in an experimental setting (Section~\ref{hybridhistine kinase:section}).

\section{Reaction networks}
The processes we consider are modeled by \textbf{reaction networks}, which can be seen as directed graphs. Specifically, a reaction network consists of a set of species $\{X_1,\dots,X_n\}$, and a directed graph whose nodes are finite linear combinations of species (called \emph{complexes}).  The directed edges are called \emph{reactions}. We let $r$ be the number of reactions. 
An example of a reaction network \cite{SontagSigns}, modeling the transfer of phosphate groups from a kinase $E$ to a substrate $S$ that has two phosphorylation sites is:
 \begin{equation}\label{network:phosphotransfer}
S_0 + E_p \ce{<=>} S_1 + E \qquad S_1 + E_p \ce{<=>} S_2 + E.
\end{equation}
The species of the network are $E, E_p, S_0,S_1,S_2$: $E,E_p$ are the unphosphorylated and phosphorylated forms of the kinase $E$ and $S_0,S_1,S_2$ denote the substrate with no, one or two phosphate groups attached.
The complexes are $S_0 + E_p, S_1 + E,  S_1 + E_p$ and $S_2 + E$. This network, which is used as a running example, originates from the work of Sontag on sign-sensitivities as well \cite{SontagSigns}. See \cite{gunawardena-notes,feinbergnotes} for an expanded introduction to reaction networks. 

The source of a reaction is called the \emph{reactant}, while the target is called the \emph{product}.
We assume the set of species is numbered such that  each complex $y$ can be identified with a vector in $\R^n$; for instance, the complex $X_1+2 X_2$ is identified with the vector $(1,2,0,\dots,0)$ in $\R^n$, where $n$ is the number of species. 
In this way, each reaction $y\rightarrow y'$ gives rise to a vector $y'-y$ in $\R^n$, encoding the net  production of each species with respect to the reaction.  After choosing an order of the set of reactions, these vectors are gathered as columns of a matrix, called the stoichiometric matrix $N\in \R^{n\times r}$.  The stoichiometric matrix for network  \eqref{network:phosphotransfer} is 
\begin{align}\label{eq:N}
 N=\left(\begin{array}{rrrr}
-1&1&0&0\\
1&-1&-1&1\\
0&0&1&-1\\
1&-1&1&-1\\
-1&1&-1&1
\end{array}\right),
\end{align}
where the species are ordered as $S_0,S_1,S_2,E,E_p$.
 
As it is custom within chemical reaction network theory, in this work we model the evolution of the concentration of the species in the network in time by means of a system of ordinary differential equations (ODEs). Specifically, denote by $x_i(t)$ the concentration of $X_i$ at time $t$ (or $x_C(t)$ for a species $C$). 
One chooses the rate $v_{y\rightarrow y'}(x)$ of each reaction of the network, to be a differentiable function from $\R^n_{\geq 0}$ to $\R_{\geq 0}$, and gathers these rates into a vector $v(x)$ from  $\R^n_{\geq 0}$ to $\R^r_{\geq 0}$, using the established orders of the sets of species and reactions. Then 
 the \textbf{evolution equations} of the vector of concentrations $x=(x_1,\dots,x_n)$ takes the form
\begin{equation}\label{eq:ode}
\frac{d x}{dt}= N v(x),\qquad x\in \R^n_{\geq 0}.
\end{equation}
The vector $v(x)$ depends  often on parameters, as it is exemplified below. 
Therefore, we often write $v_k(x)$ to indicate dependence on some vector of parameters $k$, and define
$$ f_k(x)= N v_k(x).$$

Under the assumption of \textbf{mass-action kinetics}, 
we have 
$$ v_{y\rightarrow y'} (x)= k_{y\rightarrow y'} \prod_{i=1}^n x_i^{y_i},$$
with $0^0=1$. Here, $ k_{y\rightarrow y'}>0$ is called the \emph{reaction rate constant} and is treated as a parameter, since it is often unknown. 

Let $B$ be the $n\times r$ matrix such that the $i$-th column is the reactant vector of the $i$-th reaction. 
Then, under the assumption of mass-action kinetics, the right-hand side of system \eqref{eq:ode} can be written equivalently as 
\begin{equation}\label{eq:ode2}  N v_k(x)= N \diag(k) x^B,
\end{equation}
where $x^B\in \R^r_{\geq 0}$ is defined by
$(x^B)_j=  \prod_{i=1}^n x_i^{y_{i}}$ if $y$ is the reactant of the $j$-th reaction.

The reaction rate constant is incorporated in the reaction network as a label of the reactions, such that we write for the network in \eqref{network:phosphotransfer2}
 \begin{equation}\label{network:phosphotransfer2}
S_0 + E_p \ce{<=>[k_1][k_2]} S_1 + E \qquad S_1 + E_p \ce{<=>[k_3][k_4]} S_2 + E.
\end{equation}
We write $x_1,\dots,x_5$ for the concentrations of $S_0,S_1,S_2,E,E_p$ respectively. 
Under the assumption of mass-action kinetics, the matrix $B$ and the vector $v(x)$ are
\begin{equation}\label{eq:B}
B=\left(\begin{array}{cccc}
1&0&0&0\\
0&1&1&0\\
0&0&0&1\\
0&1&0&1    \\
1 & 0 & 1 & 0
\end{array}\right),
\qquad 
v(x)=(k_1x_1x_5, k_2x_2x_4, k_3x_2x_5, k_4x_3x_4 ),
\end{equation}
which together with the matrix $N$ in \eqref{eq:N}, give the following ODE system:
\begin{align*}
\frac{dx_1}{dt} & = - k_{1} x_1 x_5 +k_{2}   x_2 x_4 \\
\frac{dx_2}{dt} & =  k_{1} x_1 x_5 -  k_{2}   x_2 x_4 - k_{3}   x_2 x_5 + k_{4} x_3 x_4
  \\ 
\frac{dx_3}{dt} & =  k_{3}   x_2 x_5 - k_{4} x_3 x_4
\\
\frac{dx_4}{dt} & = k_{1} x_1 x_5 -  k_{2}   x_2 x_4 + k_{3}   x_2 x_5 - k_{4} x_3 x_4
\\
\frac{dx_5}{dt} & = -k_{1} x_1 x_5 +  k_{2}   x_2 x_4 - k_{3}   x_2 x_5 + k_{4} x_3 x_4.
\end{align*}

It is clear from \eqref{eq:ode} that the vector $\frac{dx}{dt}$ belongs to the column span $\im(N)$ of $N$, called the \emph{stoichiometric subspace}. Thus, given an initial condition $x^0$, the solution to \eqref{eq:ode} is confined to the linear subspace $x^0+\im(N)$. Further, both $\R^n_{\geq 0}$ and $\R^n_{>0}$ are also forward-invariant for the trajectories of  \eqref{eq:ode} \cite{Sontag:2001}. 
Each of the sets $(x^0+\im(N)) \cap \R^n_{\geq 0} \subseteq \R^n_{\geq 0}$ is called a \textbf{stoichiometric compatibility class}. In this work we parametrize these sets in two ways. First, choose a matrix $W$ whose rows form a basis of $\im(N)^\perp$. Then, the set $(x^0+\im(N))\cap \R^n_{\geq 0}$ agrees with the subset of $\R^n_{\geq 0}$ defined by the equation
$$W x= W x^0,\qquad x\in \R^n_{\geq 0}.$$
This set is independent of the choice of matrix $W$ and is parametrized by $x^0\in \R^n_{\geq 0}$. 
Let now $d$ be the dimension of $\im(N)^\perp$. Alternatively, one might consider vectors $T=(T_1,\dots,T_d)\in \R^d$ and consider the sets
$$ \big\{ x\in \R^n_{\geq 0} \mid Wx = T \big\}.$$
Each such set corresponds to a stoichiometric compatibility class. However, the set depends both on $W$ and $T$, that is, the vector $T$ alone does not characterize the class. We refer to $T$ as  the vector of \textbf{total amounts} and to $W$ as a matrix of  \textbf{conservation laws}.

\medskip
A matrix of conservation laws $W$ for network \eqref{network:phosphotransfer} is  
\begin{align}\label{W}
W=\left(\begin{array}{ccccc}1&1&1&0&0\\0&1&2&0&1\\0&0&0&1&1\end{array}\right),\end{align}
which gives rise to the following equations for the stoichiometric compatibility class with vector of total amounts $(T_S,T_p,T_E)$
\begin{align}\label{eq:cons1}
x_1+x_2+x_3&=T_S, & x_2+2x_3+x_5&=T_p, & 
x_4+x_5&=T_E.
\end{align}
The first equation encodes that the substrate is conserved, the third that the kinase is conserved, and the second that the phosphate group is either in $S_1, S_2$ or $E_p$, with $S_2$ having two sites. 
Alternatively, we can write the stoichiometric compatibility class of $x^0$ as
\begin{align*}
x_1+x_2+x_3&=x_1^0+x_2^0+x_3^0, & x_2+2x_3+x_5&=x_2^0+2x_3^0+x_5^0, & 
x_4+x_5&=x_4^0+x_5^0.
\end{align*}
We illustrate that it is more meaningful to use this second parametrization when studying sign-sensitivities in Section~\ref{sec:conc}.

In this work, we are interested in the \textbf{positive steady states} of the system \eqref{eq:ode} restricted to a stoichiometric compatibility class. These are defined by a system of equations 
$$  f_k(x)=0, \qquad Wx = Wx^{0}.$$
Since $d$ equations of $f_k(x)$ are redundant (as $WN=0$), we remove them from the system $f_k(x)=0$ and obtain a system with $n$ equations and $n$ variables, which we write as 
\begin{equation}\label{eq:F}
 F_{k,x^0} (x )=0,
\end{equation}
with the first $n-d$ components of $F_{k,x^0}(x)$ obtained from $f_k(x)$ after removing redundant equations, and the last $d$ components are $W (x-x^0)^{tr}=0$. \blue{Here \emph{tr} stands for the transpose of a vector or matrix.} 
If we wish to parametrize stoichiometric compatibility classes with $T\in \R^d$, then we simply write $F_{k,T}(x)=0$ for the corresponding system. 

For network \eqref{network:phosphotransfer}, we remove the equations corresponding to the species   $S_0,S_1,E$, and obtain the following system defining the steady states in the stoichiometric compatibility class of $x^0$:
\begin{align}
k_{3}   x_2 x_5 - k_{4} x_3 x_4 &= 0,\nonumber
\\
  -k_{1} x_1 x_5 +  k_{2}   x_2 x_4 - k_{3}   x_2 x_5 + k_{4} x_3 x_4 &=0,\nonumber \\
  x_1+x_2+x_3- (x_1^0+x_2^0+x_3^0) &=0, \label{eq:F} \\
   x_2+2x_3+x_5-(x_2^0+2x_3^0+x_5^0)&=0, \nonumber\\
x_4+x_5-(x_4^0+x_5^0) &=0. \nonumber
\end{align}

We conclude this section with a definition: we say that a steady state $x^*$ is \textbf{degenerate} if the Jacobian of $F_{k,x^*}(x)$ evaluated at $x^*$ is singular, that is, has vanishing determinant.
This is equivalent to the Jacobian of the function $f_k(x)$ be singular on $\im(N)$, c.f. \cite[Eq (6.1)]{wiuf-feliu}.

\section{Sign-sensitivities}\label{sec:sensitivities}

We consider a reaction network with associated ODE system 
\eqref{eq:ode} and a steady state $\se$. In this section we investigate how the steady state $\se$ changes upon a small perturbation to a parameter of the system, that being either $k$ or $x^0$ (or $T$).
Specifically, we consider the vector of parameters $k\in \R^{m}$ of the rate function $v_k(x)$, and the vector of parameters of initial conditions $x^0\in \R^n$, or the vector of parameters of total amounts $T\in \R^d$. These live in a subspace $\Omega$ of $\R^{M}$ with $M=m+n$ or $M=m+d$, and we write generically $\alpha\in \Omega$ for the vector of parameters of either form $(k,x^0)$ or $(k,T)$. 
We let $\gamma_0\in \Omega$ be the vector of parameters corresponding to our steady state $\se$, such that $F_{\gamma_0}(\se)=0$.

\medskip
\textbf{A formula for sensitivities. }
We consider a continuously differentiable map 
$$\gamma\colon \ (-\epsilon, \epsilon) \rightarrow \Omega,$$
where $\epsilon>0$ and such that $\gamma(0)=\gamma_0$.
If $\se$ is not degenerate, then the Implicit Function Theorem implies that locally around $0$, there is a 
continuously differentiable curve $\se(s)$ with $\se(0)=\se$ and such that $\se(s)$ is a steady state of 
the reaction network and stoichiometric compatibility class with parameters $\gamma(s)$, that is, 
$F_{\gamma(s)} (\se(s))=0$. 

The question we address here is how to determine the sign of the derivative of $\se(s)$ with respect to $s$ at $s=0$, which we denote by $\se'(0)$. We let $\gamma'(s)$ denote the derivative of $\gamma$ with respect to $s$. 

We view $F_{\alpha}(x)$ as a function in both $\alpha$ and $x$ and let  $J_{\alpha,1}(x):=\frac{\partial F_{\alpha}(x)}{\partial x}\in \R^{n\times n}$ denote the Jacobian matrix of $F_{\alpha}(x)$ with respect to the vector $x$ and similarly $J_{\alpha,2}(x):=\frac{\partial F_{\alpha}(x)}{\partial \alpha}\in \R^{n\times M}$ denote the Jacobian matrix of $F_{\alpha}(x)$ with respect to the vector $\alpha$. 
Differentiation of $F_{\gamma(s)} (\se(s))=0$ with respect to $s$ and evaluation at $s=0$ gives
\begin{equation} \label{eq:differentiate}
J_{\gamma_0,1}(\se)  \cdot  \se'(0) + J_{\gamma_0,2}(\se) \cdot\gamma'(0) =0.
\end{equation}
This results in a linear system in $n$ unknowns $\se'_1(0), \dots, \se'_n(0)$ with coefficient matrix  
$J_{\gamma_0,1}(\se)$ and independent term $J_{\gamma_0,2}(\se) \cdot\gamma'(0)$, which we can find if the steady state $\se$ and $\gamma_0$ are given. 
Since the steady state $\se$ is non-degenerate, the coefficient matrix has full rank $n$, and hence this system has a unique solution. Note that neither the coefficient matrix of the linear system $J_{\gamma_0,1}(\se) $  nor  $J_{\gamma_0,2}(\se)$  depend on the specific perturbation $\gamma$. Further, the last $d$ rows of $J_{\gamma_0,1}(\se)$ are $W$.
\blue{When $\alpha=(k,x^0)$, the last $d$ rows of $J_{\gamma_0,2}(\se)$ are $(0\, |\,-W)$, where $0$ is the zero matrix of size $d\times m$. Similarly, when  $\alpha=(k,T)$,  the  last $d$ rows of $J_{\gamma_0,2}(\se)$ are $(0\, | \, -I_{d\times d})$. In both cases  the upper $n-d$ rows are zero in the last $d$ entries.}

\smallskip
Using Cramer's rule, $\se'_i(0)$ is expressed as a fraction where the denominator is the determinant of 
$J_{\gamma_0,1}(\se)$ and the numerator is  the determinant of the matrix obtained by replacing the $i$-th column of $J_{\gamma_0,1}(\se)$ by $- J_{\gamma_0,2}(\se) \cdot\gamma'(0) $.
When the rate functions are mass-action, then $\se'_i(0)$ becomes a rational function in the parameters and the entries of $\se=(\se_1,\dots,\se_n)$. 

\medskip
In the general scenario, nor the steady state $\se$ nor the parameter value $\gamma_0$ are known, and therefore
we aim at determining the sign of $\se'_i(0)$ for all values of $\se$ and $\gamma_0$, and at deciding whether this sign is independent of these values. 
As it has been used in several works, e.g. \cite{FeliuPlos,Dickenstein:structured}, the set of all positive steady states is studied by means of a \textbf{parametrization} 
$$ \varphi\colon U \rightarrow \R^n_{>0},$$
such that the image of $\varphi$ is the set of positive steady states  (see \cite{FeliuPlos} for strategies to find parametrizations).

\medskip
We illustrate this framework and   computations with selected perturbations $\gamma$ for our running example \eqref{network:phosphotransfer}.
First, note that due to the matrix of conservation laws in \eqref{W}, the steady state equations for $x_1$, $x_2$ and $x_4$ are redundant. Thus, a positive steady state is simply a point $x\in \R^5_{>0}$ satisfying the steady state equations for $x_3$ and $x_5$:
\begin{align*}
0 & =  k_{3}   x_2 x_5 - k_{4} x_3 x_4
&
0 & = -k_{1} x_1 x_5 +  k_{2}   x_2 x_4 - k_{3}   x_2 x_5 + k_{4} x_3 x_4,
\end{align*}
or equivalently
\begin{align}\label{eq:ss}
0 & =  k_{3}   x_2 x_5 - k_{4} x_3 x_4
&
0 & = -k_{1} x_1 x_5 +  k_{2}   x_2 x_4. 
\end{align}
Any solution to this system is of the form
\begin{align}\label{parametirzation:example}
\se= \Big( x_1, x_2, \frac{k_{2}k_{3}x_2^2}{k_{1}k_{4}x_1},  x_4, \frac{k_{2}x_4x_2}{k_{1}x_1}\Big),
\end{align}
that is, the set of positive steady states is parametrized by $x_1$, $x_2$ and $x_4$, where $U=\R^3_{>0}$. If these three variables are positive, then so is $\se$. 
Let $\alpha=(k_1,k_2,k_3,k_4, x_1^0,x_2^0,x_3^0,x_4^0,x_5^0) \in \R^9_{>0}$ be the vector of parameters. The function $F_{\alpha}(x)$ is taken to be the left-hand side of the system in \eqref{eq:F}. This leads to the following Jacobian matrices:
\begin{align}
\label{JW}
J_{\alpha,1}(x) &=\left(\begin{matrix}0 & k_{3} x_5& - k_{4} x_4 & - k_{4} x_3 & k_{3} x_2 \\
       - k_{1} x_5 & k_{2} x_4 - k_{3} x_5 & k_{4} x_4 & k_{2} x_2 + k_{4} x_3 & - k_{1} x_1 - k_{3} x_ 2 \\
       1 & 1 & 1 & 0 & 0\\
       0 & 1 & 2 & 0 & 1\\
       0 & 0 & 0 & 1 & 1 \end{matrix}\right), \\
 J_{\alpha,2}(x)  &=  \begin{pmatrix} 0&0&x_{{2}}x_{{5}}&-x_{{3}}x_{{4}}&0
&0&0&0&0\\  -x_{{1}}x_{{5}}&x_{{2}}x_{{4}}&-x_{{2}}x_{{5}}&x_{{3}}x_{{4}}&0&0&0&0&0\\ 
0&0&0&0&-1&-1&-1&0&0 \\ 0&0&0&0&0&-1&-2&0&-1\\ 0&0&0&0&0 &0&0&-1&-1\end{pmatrix}. \label{JW2}
\end{align}

\medskip
\textbf{Perturbing $k_1$. } We consider first  the perturbation that maps $k_1$ to $k_1+s$. This gives $\gamma'(0)=(1,0,0,0,0,0,0,0,0)$ and hence $J_{\alpha,2}(x) \cdot\gamma'(0)$ is simply the first column of $J_{\alpha,2}(x)$ in \eqref{JW2}, which is  $(0,-x_1 x_5,0,0,0)^{tr}$. 
We solve system \eqref{eq:differentiate} with these data and obtain for $\se=(x_1,\dots,x_5)$ that 
\begin{align*}
\se'_1(0) &=\frac {-x_{{1}}x_{{5}} \left( k_{{3}}x_{{2}}+k_{{3}}x_{{5}}+k_{{4}}x_{{3}}+k_{{4}}x_{{4}} \right) }{q(k,x)}, & \se'_3(0) & =\frac {x_{{1}}x_{{5}} \left( -k_{{3}}x_{{2}}+k_{{3}}x_{{5}}-k_{{4}}x_{{3}} \right) }{q(k,x)},\\ 
\se'_2(0) & =\frac {x_{{1}}x_{{5}} \left( 2\,k_{{3}}x_{{2}}+2\,k_{{4}}x_{{3}}+k_{{4}}x_{{4}} \right) }{q(k,x)},&
\se'_4(0) & =\frac {x_{{1}}x_{{5}} \left( 2\,k_{{3}}x_{{5}}+k_{{4}}x_{{4}} \right) }{q(k,x)},\\
\se'_5(0) & =\frac {-x_{{1}}x_{{5}} \left( 2\,k_{{3}}x_{{5}}+k_{{4}}x_{{4}} \right) }{q(k,x)},
\end{align*}
where 
\begin{multline*}
q(k,x)=2\,k_{1}k_{3}x_{1}x_{5}+k_{1}k_{3}x_{2}x_{5}+k_{1}k_{3}x_{5}^{2}+k_{1}k_{4}x_{1}x_{4}+k_{1}k_{4}x_{3}x_{5}+k_{1}k_{4}x_{4}x_{5}\\ +2\,k_{2}k_{3}x_{2}x_{4}+2\,k_{2}k_{3}x_{2}x_{5}+k_{2}k_{4}x_{2}x_{4}+2\,k_{2}k_{4}x_{3}x_{4}+k_{2}k_{4}x_{4}^{2}.
\end{multline*}
We readily see that $\se_1$ and $\se_5$ \blue{decrease}, and $\se_2$ and $\se_4$ \blue{increase} when $k_1$ is slightly increased. The sign of $\se_3'(0)$ is not determined (yet). But we have not imposed that $\se$ is a steady state. In order to do that, we evaluate  $\se_3'(0)$ in the parametrization and obtain that the sign of $\se_3'(0)$ at a steady state is the sign of
$$ -k_3x_2+k_3\frac{k_{2}x_4x_2}{k_{1}x_1}-k_ 4\frac{k_{2}k_{3}x_2^2}{k_{1}k_{4}x_1}   =
\frac{k_{3} x_2}{k_1x_1} \big(  -k_1x_1+k_{2}x_4 -  k_{2}x_2\big).$$
Clearly, this expression can be positive, negative or zero, after  appropriately choosing $k_1,k_2,x_1,x_2,x_4$. We conclude that the sign of the change of $\se_3$ with respect to this perturbation is not parameter and variable independent, and therefore information on the specific value of the steady state is required.  
 
 \medskip
\textbf{Perturbing $x_4^0$. }
 We perturb now  $x_4^0$ (that is, the initial concentration of $E$) by the addition of a small amount $s$. We now have $\gamma'(0)=(0,0,0,0,0,0,0,1,0)$ and hence $J_{\alpha,2}(x) \cdot\gamma'(0)$ is the eighth column of $J_{\alpha,2}(x)$, which is  $(0,0,0,0,-1)^{tr}$. 
We solve the resulting system \eqref{eq:differentiate} and obtain for $\se=(x_1,\dots,x_n)$ that 
\begin{align*}
\se'_1(0)&= \frac{-k_1k_4x_1x_3+k_2k_3x_2^2+k_2k_3x_2x_5+k_2k_4x_2x_4+k_2k_4x_3x_4}{q(k,x)}.
\end{align*}
After evaluating the numerator of $\se'_1(0)$ in the parametrization, we obtain
$$ \frac{2k_2^2 k_3x_2^2x_4}{k_1x_1} + k_2k_4x_2x_4,$$
which only attains positive values. We conclude that the concentration of $S_0$ at steady state increases when an infinitesimal amount of $E$ is added to the system.

\medskip
We proceed in the same way to determine $\se'_i(0)$ after perturbing each of the reaction rate constants $k_j$ and initial concentrations $x_j^0$ one by one by adding a small amount $s$. The sign-sensitivities are summarised in Table~\ref{table:signs}. 
A seemingly striking insight of this table is that an increase of a certain species can be paired with both an increase or decrease of another species, depending on the perturbation applied.
For instance,  $E_p$ increases ($\se_5'(0)>0)$ after increasing either $x_3^0$ or $x_4^0$, while $E$ decreases ($\se_4'(0)<0$)  for the first perturbation and increases  ($\se_4'(0)>0$)   for the second. This highlights that perturbation studies need to be appropriately interpreted, as it would be wrong to conclude, out of the column for $x_3^0$, that $E$ decreases when $E_p$ increases. That is, one needs to pair the direct perturbation to the response, and not two responses to a perturbation. 
This ``paradoxical'' result has been first pointed out in \cite{paradoxical}. 

\begin{table}[t]
\begin{center}
\begin{tabular}
{c||c|c|c|c|c|c|c|c|c|c|}
 & $k_1$ & $k_2 $ & $k_3$ & $k_4$ &  $x_1^0$ & $x_2^0$ & $x_3^0$ & $x_4^0$ & $x_5^0$ \\\hline\hline
$\se'_1(0)$ &  $-$  &  $+$  & \cellcolor{gray!30!white}   $+\tau_1$ &  \cellcolor{gray!30!white}$-\tau_1$  &$+$&$+$& \cellcolor{gray!30!white} $-\tau_1$& \cellcolor{gray!30!white} $+$& \cellcolor{gray!30!white} $-$\\ \hline
$\se'_2(0)$ &   $+$ &  $-$ &  $- $ & $+$  &$+$&$+$&$+$& \cellcolor{gray!30!white} $-\tau_2$&\cellcolor{gray!30!white}  $+\tau_2$ \\ \hline
$\se'_3(0)$ & \cellcolor{gray!30!white}$-\tau_1$  & \cellcolor{gray!30!white} $+\tau_1$ & $+$  & $-$ &  \cellcolor{gray!30!white} $-\tau_1$&$+$&$+$& \cellcolor{gray!30!white} $- $& \cellcolor{gray!30!white} $+$ \\ \hline
$\se'_4(0)$ &  $+$ &    $-$   & $+$  & $-$ & $+ $ & \cellcolor{gray!30!white}  $-\tau_2$ & $-$ &  $+$  &$+$\\ \hline$\se'_5(0)$ &   $-$ &   $+$ & $-$ & $+$ & $-$ & \cellcolor{gray!30!white}$+\tau_2$&$+$&$+$&$+$\\ \hline
\end{tabular}
\end{center}
\caption{Sign-sensitivities with respect to adding a small amount to each of the parameters. Each column gives the sign-sensitivity with respect to one parameter. $\tau_1$ is the sign of $k_1x_1 + k_2(x_2-x_4)$ and $\tau_2$ is the sign of $k_1k_4x_1^2-k_2k_3x_2^2$ at the steady state (which can be zero). Gray cells are determined using the parametrization, and for the other cells the sign is determined   for all $x\in \R^5_{>0}$.}\label{table:signs}
\end{table}

 \medskip
 \textbf{General perturbations. }  The outlined framework accommodates all types of perturbations, not only consisting in adding a small amount to one of the parameters. We illustrate this with two perturbations:  in the first   we scale two reaction rate constants by $s$, and in the second a small amount $s$ is added to $x_4^0$ and $x_5^0$ simultaneously. 

\smallskip
First, consider the perturbation to $\alpha=(k_1,k_2,k_3,k_4, x_1^0,x_2^0,x_3^0,x_4^0,x_5^0)$
such that
$$\gamma(s)= (sk_1,k_2,sk_3,k_4, x_1^0,x_2^0,x_3^0,x_4^0,x_5^0).$$
Then $\gamma'(0)=(k_1,0,k_3,0,0,0,0,0,0)$ and $J_{\gamma_0,2}(\se)\cdot \gamma'(0)$  is the vector $(k_3x_2x_5,-k_1x_1x_5-k_3x_2x_5)^{tr}$. Solving the corresponding system \eqref{eq:differentiate}, we obtain that   $\se_1'(0)$ and $\se_5'(0)$ are negative, $\se_3'(0)$ and $\se_4'(0)$ are positive, and $\se_2'(0)$ can be of either sign. Here only the signs of $\se_4'(0)$ and $\se_5'(0)$ can be determined without the parametrization. 

If instead we consider the perturbation $\gamma(s)= (sk_1,sk_2,k_3,k_4, x_1^0,x_2^0,x_3^0,x_4^0,x_5^0)$, then all derivatives become zero, that is, the steady state is invariant under simultaneously scaling $k_1$ and $k_2$ (as it is readily seen from \eqref{eq:ss}).

\smallskip
Consider next the perturbation 
 to  $\alpha$ such that 
 $$\gamma(s)=(k_1,k_2,k_3,k_4, x_1^0,x_2^0,x_3^0,x_4^0+s,x_5^0+s).$$
 Then
 $\gamma'(0)=(0,0,0,0,0,0,0,1,1)$
and $J_{\gamma_0,2}(\se)\cdot \gamma'(0)$ is the sum of the last two columns of $J_{\gamma_0,2}(\se)$, namely the vector
$ (0,0,0,-1,-2)^{tr}$. 
Then the solution $\se_i'(0)$ to the corresponding system is the sum of $\se_i'(0)$ for the perturbation $x_4^0 \mapsto x_4^0 +s$ and $\se_i'(0)$ for the perturbation $x_5^0 \mapsto x_5^0 +s$. By Table~\ref{table:signs}, the sign of $\se_4'(0)$ and  $\se_5'(0)$ is $+$. We further obtain that the sign  of $\se_1'(0)$, $\se_2'(0)$ and  $\se_3'(0)$ can be any  of $-,0,+$.

\medskip
In this example it is straightforward to decide whether the numerator of $\se_i'(0)$ can attain any sign when the polynomial has coefficients of both signs. For larger systems, an often successful approach consists on investigating the vertices of the Newton polytope associated with the polynomial. If two of the vertices correspond to monomials with coefficients of opposite signs, then the polynomial attains all signs for positive values of the variables. This strategy has been used in numerous recent works with chemical reaction network theory, e.g.  \cite{FeliuPlos,obatake:hopf,conradi:mixed}, and we refer the reader to \cite{FeliuPlos} for an expository account.

\begin{remark}
In \cite{brehm:sensitivity} the authors provide structural conditions to determine whether a sign-sensitivity is zero, for ODE systems arising from a subclass of rate functions that does not include mass-action. In that work, perturbations on reaction rate constants of the form $k_i\mapsto k_i + s$ are considered using the corresponding equation \eqref{eq:differentiate}. 
In Metabolic Control Analysis (MCA) \cite{Fell:MCA}, so called \emph{flux/concentration control coefficients} are considered. The latter measures sensitivity similarly to here, as the derivative of the logarithm of a concentration $\se_i$ with respect to the logarithm of another concentration $x_j^0$ is taken, or what is equivalent
\[ \se_i'(0) \cdot \tfrac{x_j^0}{\se_i(0)}. \]
These are found using \eqref{eq:differentiate} as well, after adjusting the formula. Since $\tfrac{x_j^0}{\se_i(0)}$ is positive, this factor is redundant when considering sign-sensitivities, but in MCA, of relevance is the value of this (normalized) derivative, and not only its sign.  See \cite{gunawardena:MCA} for a gentle introduction to MCA and control coefficients. 
\end{remark}

\section{Perturbing concentrations}\label{sec:conc}
In this section we take a closer look at perturbations caused by a change in the stoichiometric compatibility class. 
The first observation we make is that perturbations of the total amounts might lead to apparently contradictory results. To see this, consider network \eqref{network:phosphotransfer}, 
with conservation laws and total amounts as given in \eqref{eq:cons1} and corresponding function $F_{k,T}(x)$. The vector of parameters is now $\alpha=(k_1,k_2,k_3,k_4,T_S,T_p,T_E)$. Under the perturbation $\gamma$ on the total amount of phosphorylated proteins $T_p \mapsto T_p + s$, we have $\gamma'(0)=(0,0,0,0,0,1,0)$ and 
we obtain
$$ \sign(\se_1'(0))= - ,\quad \sign(\se_2'(0))=\pm, \quad \sign(\se_3'(0))=+,\quad \sign(\se_4'(0))= -,\quad \sign(\se_5'(0))= -.  $$
We consider now another matrix of conservation laws $W'$, with  same second row as $W$ in \eqref{W}:
$$W'=\left(\begin{array}{ccccc}1&0&-1&0&-1\\0&1&2&0&1\\0&0&0&1&1\end{array}\right). $$
We perform the same perturbation on $T_p$, and obtain the following sign-sensitivities:
$$ \sign(\se_1'(0))= + ,\quad \sign(\se_2'(0))=+, \quad \sign(\se_3'(0))=+,\quad \sign(\se_4'(0))= \pm,\quad \sign(\se_5'(0))= \pm.  $$
Although we did not change the expression for the total amount $T_p$, the sign-sensitivities changed drastically. For example, $S_0$ decreases when $T_p$ is increased for the first matrix of conservation laws, while it decreases for the second choice. 
We conclude that perturbations with respect to total amounts might not be meaningful and need to be appropriately interpreted \blue{as perturbations of the considered system}.

\medskip
We proceed to investigate perturbations with respect to initial concentrations, and show that in this case, the sign-sensitivities do not depend on the choice of matrix of conservation laws. For the rest of the section we let $\alpha=(k,x^0)$. Note that $J_{\alpha,2}(x)$ is independent of $x^0$ since $F_\alpha(x)$ is linear in $x^0$. 

\begin{lemma}
Consider the perturbation $\gamma$ sending $x_i^0$ to $x_i^0+s$, and being the identity on the other parameters. 
For $j=1,\dots,n$, the derivative $\se_j'(0)$ does not depend on the basis of $\im(N)^\perp$ used to construct the function $F_\alpha(x)$.
\end{lemma}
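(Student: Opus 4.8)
The plan is to replace the system $F_{k,x^0}(x)=0$, whose construction involves the chosen basis $W$ of $\im(N)^\perp$, by an intrinsic description of the steady states in the stoichiometric compatibility class of $x^0$, and to differentiate that description instead. First I would observe that, since the rows of $W$ form a basis of $\im(N)^\perp$, one has $\Ker(W)=\im(N)$, so the last $d$ components $W(x-x^0)$ of $F_{k,x^0}(x)$ vanish exactly when $x-x^0\in\im(N)$, a condition independent of $W$. Moreover $WN=0$ forces $\operatorname{rank}(N)=n-d$, so the $d$ rows of $f_k=Nv_k$ that were discarded to build $F_{k,x^0}$ are linear combinations of the $n-d$ retained ones; hence $f_k(x)=0$ as soon as the retained components vanish. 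Putting these together, the zero set of $F_{k,x^0}$ is precisely $\{x : f_k(x)=0,\ x-x^0\in\im(N)\}$, which depends neither on $W$ nor on which $d$ equations were removed.

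Next I would differentiate along the perturbation $x^0\mapsto x^0+s\,e_i$ (keeping $k$ fixed, with $e_i$ the $i$-th standard basis vector). The curve $\se(s)$ furnished by the Implicit Function Theorem satisfies $f_k(\se(s))=0$ and $\se(s)-(x^0+s\,e_i)\in\im(N)$; subtracting the same relation at $s=0$ and differentiating at $s=0$ yields the two conditions
\[
\tfrac{\partial f_k}{\partial x}(\se)\cdot\se'(0)=0,\qquad \se'(0)-e_i\in\im(N),
\]
neither of which mentions $W$. As a consistency check one verifies that the linear system \eqref{eq:differentiate} built from $F_\alpha$ reduces to exactly these two conditions, using the block structure of $J_{\gamma_0,1}(\se)$ and of $J_{\gamma_0,2}(\se)\gamma'(0)$ recorded right after \eqref{eq:differentiate} (the last $d$ rows of $J_{\gamma_0,1}(\se)$ are $W$, those of $J_{\gamma_0,2}(\se)$ are $(0\,|\,-W)$, and $\gamma'(0)$ is supported on the $x^0$-block).

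Finally I would argue that these two conditions pin down $\se'(0)$ uniquely, and hence independently of $W$: if $u$ and $u'$ both satisfy them, then $u-u'\in\Ker\!\big(\tfrac{\partial f_k}{\partial x}(\se)\big)\cap\im(N)$, and since $\se$ is non-degenerate this intersection is $\{0\}$ by the characterization recalled at the end of Section~2 (degeneracy of $\se$ is equivalent to $\tfrac{\partial f_k}{\partial x}(\se)$ being singular on $\im(N)$). Therefore $\se'(0)$, and in particular each coordinate $\se_j'(0)$, is determined entirely by $W$-free data. I expect the only mildly delicate part to be the bookkeeping in the consistency check — matching the rows of $F_\alpha$ that come from $f_k$ (and accounting for the redundant ones) against membership in $\Ker(\partial f_k/\partial x)$, and the rows $W(x-x^0)$ against membership in $\im(N)$ — but once the block structure is written out this is routine; the real content is the identity $\Ker(W)=\im(N)$ combined with non-degeneracy.
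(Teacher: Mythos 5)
Your argument is correct, but it follows a genuinely different route from the paper's. The paper's proof is a two-line formal computation: if $W'=AW$ with $A$ invertible, then the linear system \eqref{eq:differentiate} built from $W'$ is obtained from the one built from $W$ by left multiplication with the invertible block matrix $\left(\begin{smallmatrix} I_{n-d} & 0 \\ 0 & A\end{smallmatrix}\right)$, so the two systems have the same unique solution. You instead extract an intrinsic, basis-free characterization of the sensitivity vector: $\se'(0)$ is the unique $u$ with $\tfrac{\partial f_k}{\partial x}(\se)\,u=0$ and $u-e_i\in\im(N)$, with uniqueness coming from non-degeneracy via $\Ker\bigl(\tfrac{\partial f_k}{\partial x}(\se)\bigr)\cap\im(N)=\{0\}$. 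Both are complete; the only point you should make explicit is that the retained $n-d$ components of $f_k$ must span the row space of $N$ (which the paper's construction guarantees), so that their vanishing on $u$ is equivalent to $\tfrac{\partial f_k}{\partial x}(\se)\,u=0$. What each approach buys: the paper's is shorter and needs no appeal to non-degeneracy beyond the solvability of \eqref{eq:differentiate}; yours is more conceptual and slightly stronger --- it shows the answer is also independent of \emph{which} redundant equations of $f_k$ were discarded, and it makes transparent that the sensitivity depends on the perturbed index $i$ only through the coset $e_i+\im(N)$, i.e.\ through the $i$-th column of $W$, which is exactly the mechanism behind Lemma~\ref{lem:zero} and Lemma~\ref{corollary}.
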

\begin{proof}
Let $W,W'$ be two matrices of conservation laws. Then there exists an invertible $d\times d$ matrix $A$ such that $W'=AW$. 
Let $F_{\alpha}(x)$ and $F'_{\alpha}(x)$ be the corresponding steady state functions from \eqref{eq:F}, and denote by $J,J'$ (with the appropriate subindices) their Jacobian matrices respectively. 
Then
$$ J'_{\gamma_0,1}(\se)  \cdot  \se'(0) + J'_{\gamma_0,2}(\se) \cdot\gamma'(0)
=  \left(\begin{array}{cc}I_{n-d}& 0\\ 0&A\end{array}\right) \Big(J_{\gamma_0,1}(\se)  \cdot  \se'(0) + J_{\gamma_0,2}(\se) \cdot\gamma'(0)\Big),
$$
where $I_{n-d}$ is the identity matrix of size $n-d$ (see text after \eqref{eq:differentiate}). 
Since $ \left(\begin{array}{cc}I_{n-d}&0\\  0&A\end{array}\right)$ is invertible,  the solution to
\eqref{eq:differentiate}  for $W$ and $W'$ agree.
 \end{proof}

Having established that $\se_j'(0)$ does not depend on the choice of matrix of conservation laws, we can easily prove a series of lemmas by appropriately selecting this matrix. \blue{First, we note that if a concentration does not take part of any conservation law, then all sensitivities with respect to changes to this concentration are zero. }

\begin{lemma}\label{lem:zero}
Consider the perturbation   sending $x_i^0$ to $x_i^0+s$, and being the identity on the other parameters.
If the $i$-th component of all vectors in $\im(N)^\perp$ is zero, then $\se_j'(0)=0$ for all $j=1,\dots,n$. In other words, the sign-sensitivities are all zero. 
\end{lemma}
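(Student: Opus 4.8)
The plan is to trace the prescribed perturbation through the differentiated identity \eqref{eq:differentiate} and show that its inhomogeneous term vanishes, so that non-degeneracy of $\se$ forces $\se'(0)=0$.

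First I would record the data of the perturbation. Sending $x_i^0$ to $x_i^0+s$ and fixing every other parameter gives $\gamma'(0)=(0,e_i)\in\R^m\times\R^n$, where $e_i$ is the $i$-th standard basis vector of $\R^n$. Since the first $n-d$ components of $F_\alpha(x)$ are obtained from $f_k(x)=Nv_k(x)$, they do not involve $x^0$, so the first $n-d$ rows of $J_{\alpha,2}(\se)$ vanish on the $x^0$-block; and by the description of $J_{\gamma_0,2}(\se)$ given after \eqref{eq:differentiate}, its last $d$ rows are $(0\mid -W)$ for a matrix $W$ of conservation laws. Hence $J_{\gamma_0,2}(\se)\cdot\gamma'(0)$ has its first $n-d$ entries equal to $0$ and its last $d$ entries equal to $-W e_i$, i.e., minus the $i$-th column of $W$.

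The key step is the observation that this $i$-th column of $W$ is the zero vector: the rows of $W$ form a basis of $\im(N)^\perp$, so by hypothesis each row of $W$ has vanishing $i$-th coordinate, which is exactly the statement that the $i$-th column of $W$ vanishes. (Alternatively one invokes the preceding lemma to work with any matrix of conservation laws, but here no special choice is needed, since the hypothesis makes the $i$-th column vanish for every such matrix.) Therefore $J_{\gamma_0,2}(\se)\cdot\gamma'(0)=0$, and \eqref{eq:differentiate} collapses to $J_{\gamma_0,1}(\se)\cdot\se'(0)=0$. As $\se$ is non-degenerate, $J_{\gamma_0,1}(\se)$ is invertible, whence $\se'(0)=0$, i.e., $\se_j'(0)=0$ for all $j=1,\dots,n$. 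I do not anticipate a genuine obstacle here; the only point that needs a moment's care is matching the hypothesis ``the $i$-th component of all vectors in $\im(N)^\perp$ is zero'' with the vanishing of the $i$-th column of $W$, after which the full-rank coefficient matrix of \eqref{eq:differentiate} immediately annihilates the unique solution.
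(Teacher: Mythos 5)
Your proof is correct and follows essentially the same route as the paper's: the hypothesis forces the $i$-th column of any matrix of conservation laws to vanish, so $J_{\gamma_0,2}(\se)\cdot\gamma'(0)=0$ and invertibility of $J_{\gamma_0,1}(\se)$ gives $\se'(0)=0$. You merely spell out in more detail the block structure of $J_{\gamma_0,2}(\se)$ that the paper leaves implicit.
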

\begin{proof}
By hypothesis, the $i$-th column of any matrix of conservation laws is zero. Consequently $J_{\gamma_0,2}(\se) \cdot\gamma'(0)$ is the zero vector and the only solution to \eqref{eq:differentiate} is the zero vector.
\end{proof}

In the next proposition we discuss perturbations with respect to an initial concentration that only appears in one conservation law, and how it relates to the  perturbation with respect to the corresponding total amount.

\begin{proposition}\label{prop:form}
Assume the matrix of conservation laws $W=(w_{j,i})$ is such that the $i$-th column has only one non-zero entry, that is, there exists $\ell$ such that 
\begin{align*}
w_{\ell',i} & =0 \quad \textrm{for }\ell'\neq \ell \quad\textrm{and}\quad w_{\ell,i}  \neq 0.
\end{align*}
Let $M_j$  be the minor of $ J_{\gamma_0,1}(\se)$ obtained by removing the $j$-th column and the $(n-d+\ell)$-th row, divided by  $\det J_{\gamma_0,1}(\se)$. 

Then  
\begin{itemize}
\item $\se_j'(0)$ for the perturbation $\gamma_i$ sending $x_i^0$ to $x_i^0+s$
equals $(-1)^{n-d+\ell+j}w_{\ell,i} M_j$.
\item $\se_j'(0)$ for the perturbation $\gamma^*_\ell$ sending $T_\ell$ to $T_\ell+s$
equals $(-1)^{n-d+\ell+j} M_j$.
\end{itemize}
\end{proposition}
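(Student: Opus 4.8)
The plan is to read off $\se_j'(0)$ directly from Cramer's rule, as set up in the text following \eqref{eq:differentiate}, using that the hypothesis on $W$ makes the relevant column of $J_{\gamma_0,2}(\se)$ extremely sparse.

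First I would determine the independent term $-J_{\gamma_0,2}(\se)\cdot\gamma'(0)$ in \eqref{eq:differentiate} for each of the two perturbations. For $\gamma_i$ the vector $\gamma'(0)$ selects the column of $J_{\gamma_0,2}(\se)$ indexed by $x_i^0$; by the description of $J_{\gamma_0,2}(\se)$ recorded after \eqref{eq:differentiate}, the upper $n-d$ entries of that column vanish and the lower $d$ entries form $-$(the $i$-th column of $W$). The hypothesis says this $i$-th column of $W$ has a single nonzero entry $w_{\ell,i}$, in row $\ell$, so $-J_{\gamma_0,2}(\se)\cdot\gamma'(0)$ is the vector whose only nonzero entry is $w_{\ell,i}$, located in position $n-d+\ell$ (the coordinate of $F_{\alpha}$ attached to the $\ell$-th row of $W$). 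For $\gamma^*_\ell$, the same reasoning with $(0\,|\,-I_{d\times d})$ replacing $(0\,|\,-W)$ gives the vector whose only nonzero entry is $1$, again in position $n-d+\ell$. I would also point out that the coefficient matrix $J_{\gamma_0,1}(\se)$ is identical in the two cases, since its last $d$ rows are $W$ whether one parametrizes the compatibility classes by $x^0$ or by $T$, while its upper $n-d$ rows come from $f_k$ and are unaffected.

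Next I would invoke Cramer's rule: $\se_j'(0)$ equals the determinant of $J_{\gamma_0,1}(\se)$ with its $j$-th column replaced by the independent term just computed, divided by $\det J_{\gamma_0,1}(\se)$ (which is nonzero because $\se$ is non-degenerate, so $M_j$ is well defined as well). Expanding the numerator along its $j$-th column — which has exactly one nonzero entry, sitting in row $n-d+\ell$ — leaves a single term: that entry ($w_{\ell,i}$, respectively $1$) times $(-1)^{(n-d+\ell)+j}$ times the minor of $J_{\gamma_0,1}(\se)$ obtained by deleting row $n-d+\ell$ and column $j$ (the deleted column being the replaced one, the remaining columns are those of $J_{\gamma_0,1}(\se)$). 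Dividing that minor by $\det J_{\gamma_0,1}(\se)$ is precisely $M_j$, and the two claimed identities $\se_j'(0)=(-1)^{n-d+\ell+j}w_{\ell,i}M_j$ and $\se_j'(0)=(-1)^{n-d+\ell+j}M_j$ follow.

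I do not expect a genuine obstacle here: the statement is essentially a cofactor-expansion bookkeeping computation. The only points requiring care are using the correct form of the bottom block of $J_{\gamma_0,2}(\se)$ for each parametrization ($-W$ versus $-I_{d\times d}$), locating the nonzero entry in row $n-d+\ell$ rather than row $\ell$, and carrying the cofactor sign $(-1)^{(n-d+\ell)+j}$ consistently, since both the deleted row $n-d+\ell$ and the deleted column $j$ enter it.
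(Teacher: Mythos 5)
Your proposal is correct and follows essentially the same route as the paper's proof: the paper likewise reduces everything to observing that $J_{\gamma_0,2}(\se)\cdot\gamma_i'(0)$ (resp.\ $J_{\gamma_0,2}(\se)\cdot(\gamma^*_\ell)'(0)$) is the vector with $-w_{\ell,i}$ (resp.\ $-1$) in the $(n-d+\ell)$-th entry and zeros elsewhere, leaving the Cramer's-rule cofactor expansion implicit. You simply spell out that last bookkeeping step, with the signs handled correctly.
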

\begin{proof}
The statement of the proposition follows after noticing that $J_{\gamma_0,2}(\se) \cdot\gamma_i'(0)$ is the  vector with $-w_{\ell,i}$ in the $(n-d+\ell)$-th entry and zero everywhere else, and the vector $J_{\gamma_0,2}(\se) \cdot (\gamma^*_\ell)'(0)$  is $-1$ in the  $(n-d+\ell)$-th entry and zero otherwise.
\end{proof}

In particular, it follows from  Proposition~\ref{prop:form} that if $w_{\ell,i}>0$, then the perturbations with respect to $x_i^0$ or $T_\ell$ yield sensitivities with the same sign.  As a consequence, if the matrix $W$ is row reduced, then perturbation with respect to a slight increase of a total amount can be interpreted as the perturbation with respect to $x_i^0$ for $i$ the index of the first non-zero entry of the corresponding conservation law. Hence, the value of the perturbation with respect to this total amount is well defined under the restriction that the other conservation laws do not involve $x_i$.

An immediate consequence of  Proposition~\ref{prop:form} is that  \blue{if two columns $i,i'$ of $W$ have both only one non-zero entry,  at the $\ell$-th row, then sensitivities with respect to $x_i^0$ and $x_{i'}^0$ agree up to the product of these non-zero entries.} This is summarized in the following lemma. 

\begin{lemma}\label{corollary}
If there exists a basis $\{w_1,\dots,w_d\}$ of $\im(N)^\perp$ and three indices $i,i',\ell$ such that 
\begin{align*}
w_{\ell',i} & =w_{\ell',i'}=0 \quad \textrm{for }\ell'\neq \ell & 
w_{\ell,i} & \neq 0, & w_{\ell,i'} &\neq 0,
\end{align*}
then  $c_j'(0)$ 
 for the perturbation $\gamma_i$  sending $x_i^0$ to $x_i^0+s$ agrees with that for the perturbation $\gamma_{i'}$  sending $x_{i'}^0$ to $x_{i'}^0+s$ times $w_{\ell,i'}/w_{\ell,i}$. 
\end{lemma}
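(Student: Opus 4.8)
The plan is to deduce Lemma~\ref{corollary} directly from Proposition~\ref{prop:form}. The hypothesis provides a single matrix of conservation laws $W$ with rows $w_1,\dots,w_d$ such that \emph{both} columns $i$ and $i'$ have their unique non-zero entry in row $\ell$. So I would first observe that this matrix $W$ satisfies the column hypothesis of Proposition~\ref{prop:form} simultaneously for the index $i$ (with the same $\ell$) and for the index $i'$ (with the same $\ell$). Crucially, the quantity $M_j$ appearing in Proposition~\ref{prop:form} — the minor of $J_{\gamma_0,1}(\se)$ obtained by deleting the $j$-th column and the $(n-d+\ell)$-th row, divided by $\det J_{\gamma_0,1}(\se)$ — depends only on $j$ and on $\ell$, and $\ell$ is the same for both columns. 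Hence the same $M_j$ governs both perturbations.

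Next I would simply apply the first bullet of Proposition~\ref{prop:form} twice. For the perturbation $\gamma_i$ sending $x_i^0\mapsto x_i^0+s$ it gives $\se_j'(0) = (-1)^{n-d+\ell+j} w_{\ell,i}\, M_j$, and for $\gamma_{i'}$ sending $x_{i'}^0\mapsto x_{i'}^0+s$ it gives $\se_j'(0) = (-1)^{n-d+\ell+j} w_{\ell,i'}\, M_j$. Since the sign factor $(-1)^{n-d+\ell+j}$ and the rational number $M_j$ are identical in both expressions, dividing one by the other yields that the sensitivity for $\gamma_i$ equals the sensitivity for $\gamma_{i'}$ times $w_{\ell,i}/w_{\ell,i'}$ — equivalently, the sensitivity for $\gamma_{i'}$ times $w_{\ell,i'}/w_{\ell,i}$ is the sensitivity for $\gamma_i$, matching the statement. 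One small point worth spelling out is the use of the Lemma preceding Proposition~\ref{prop:form}: the conclusion is phrased for ``a basis $\{w_1,\dots,w_d\}$,'' i.e. for one particular choice of matrix of conservation laws, which is exactly what Proposition~\ref{prop:form} is applied to; no change of basis is needed within the proof, though the independence lemma is what makes the statement robust.

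There is essentially no obstacle here: the proof is a two-line corollary of Proposition~\ref{prop:form}, the only thing to be careful about being the bookkeeping of indices and the observation that $M_j$ is the \emph{same} object in both applications because the deleted row index $n-d+\ell$ is the same. If one wanted to avoid invoking Proposition~\ref{prop:form} as a black box, one could instead argue directly: under the hypothesis, $J_{\gamma_0,2}(\se)\cdot\gamma_i'(0)$ and $J_{\gamma_0,2}(\se)\cdot\gamma_{i'}'(0)$ are scalar multiples of the same standard basis vector $-e_{n-d+\ell}$ (with scalars $w_{\ell,i}$ and $w_{\ell,i'}$ respectively), so by linearity of the solution of the linear system \eqref{eq:differentiate} in its right-hand side, the two solution vectors $\se'(0)$ are scalar multiples of each other with ratio $w_{\ell,i'}/w_{\ell,i}$, which gives the claim componentwise. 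Either route is short; I would present the first, citing Proposition~\ref{prop:form}.
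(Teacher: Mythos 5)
Your proof is correct and follows essentially the same route as the paper: both apply Proposition~\ref{prop:form} twice and observe that the same quantity $M_j$ (same deleted row $n-d+\ell$, same column $j$) appears in both expressions, with scalar factors $w_{\ell,i}$ and $w_{\ell,i'}$ respectively. One small caveat: your computation correctly gives $\se_j'(0)$ for $\gamma_i$ equal to $\se_j'(0)$ for $\gamma_{i'}$ times $w_{\ell,i}/w_{\ell,i'}$, and your ``equivalently'' step flipping this to $w_{\ell,i'}/w_{\ell,i}$ is not actually an equivalence --- the inverted ratio appears to be a slip in the lemma's own wording (the paper's proof produces the same ratio as your first expression), and it is immaterial for the sign conclusions drawn from the lemma.
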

\begin{proof}
By Proposition~\ref{prop:form}, the numerator of  $c_j'(0)$ for the perturbation $\gamma_i$ is  $(-1)^{n-d+\ell + j}w_{\ell,i}$ times the minor of $ J_{\gamma_0,1}(\se)$ obtained by removing the $j$-th column and the $(n-d+\ell)$-th row, and for $\gamma_{i'}$, this same minor is multiplied by $(-1)^{n-d+\ell+j} w_{\ell,i'}$. 
\end{proof}

As an example, consider the two rivaling models in the introduction. For model 1, a matrix of conservation laws is 
$$W= \begin{pmatrix}1 & 1 & 0 & 0 & 0 & 0\\
0 & 0 & 1 & 1 & 0 & 1\\
0 & 0 & 0 & 0 & 1 & 1\end{pmatrix},$$
with the order of species $E, E_p, S_0,S_1,I,Y$. The pairs $(E,E_p)$ and $(S_0,S_1)$ satisfy the hypothesis of Lemma~\ref{corollary}. Therefore, for any $j$, the sign of $\se_j'(0)$ is the same when either $E$ or $E_p$ are increased, and similarly, is the same when either $S_0$ or $S_1$ are increased.

\section{Parameter-independent sign-sensitivities}\label{sec:indep}
Although the computation of $\se_i'(0)$ by solving system \eqref{eq:differentiate} might seem straightforward, it requires the computation and analysis of two symbolic determinants. As noticed in  \cite{feliu-bioinfo,baudier:biomodels}, these computations are expensive for relatively large reaction networks, as those encountered in applications. 
In this section we investigate an alternative approach to decide whether the sign of $\se_i'(0)$ does not depend on the value of the parameters nor on $\se = (x_1,\dots,x_n)\in \R^n_{>0}$, that is, without imposing that $\se$ is a steady state. We do this for perturbations of an initial concentration as in the previous section. Once it has been established that the sign is independent of the parameters and $x$, then it can easily be determined after arbitrarily choosing values.

The subsequent results are based on the study of \emph{injective networks} and sign vectors, as presented in \cite{MullerSigns}. 
For that, some notation needs to be introduced. The sign-vector $\sigma(v)$ of a vector $v$ is obtained by taking the sign component-wise. For $V\subseteq \R^n$, let $\sigma(V)$ be the set of  sign vectors of all elements in $V$, and $\Sigma(V)$ be the subset of $\R^n$ containing all, possibly lower dimensional, orthants that $V$ intersects. 
Consider a function in $\R^n_{>0}$ of the form 
$$ g_k(x)=N \diag(k) x^B,$$
with $N\in \R^{n\times r}$ of rank $n-d$, $B\in \R^{r\times n}$, $k\in \R^r_{>0}$. Let $S\subseteq \R^n$ be a vector subspace of dimension $d$ and  $N'$ be a submatrix of $N$ given by $n-d$ linearly independent rows of $N$ (such that $\ker(N)=\ker(N')$). 
Define a function $G_k(x)$ with the first $n-d$ components equal to $N'\diag(k) x^B$, and the last $d$ components  $Wx^{tr}$, for $W$ any basis of $S^\perp$. Then the determinant of the Jacobian of $G_k$ is a polynomial in $k$ and $x$ such that all coefficients have the same sign if and only if 
\begin{equation}\label{eq:sign}
\sigma(\ker(N))\cap\sigma\big(B^{tr}(\Sigma(S \backslash\{0\}))\big)=\emptyset.
\end{equation}
(see \cite{MullerSigns}).  Further, if any of these conditions hold, the function $G_k(x)$ is injective on all cosets $(x^0+S)\cap \R^n_{>0}$ for any choice of $k$. 
\blue{To understand how \eqref{eq:sign} arises, one first notes that the determinant of the Jacobian $J_{G_k}$ of $G_k$ is a polynomial in $k$ and $x$ such that all coefficients have the same sign if and only if it never vanishes. Vanishing of $\det  J_{G_k}$  means that the Jacobian of $G_k$ has non-trivial kernel, that is, there exists a non-zero vector $u$ in $\ker (J_{g_k})$ which further satisfies  $Wu=0$, i.e. $u\in S\setminus \{0\}$. Using $J_{g_k}(x)=N \diag(k) B^{tr} \diag(\tfrac{1}{x})$, we have $u\in \ker(J_{g_k}(x))$ if and only if $\ker(N)$ contains $\diag(k) B^{tr} \diag(\tfrac{1}{x})u.$ Condition \eqref{eq:sign} arises from noticing that varying $x$ and $u\in S\setminus \{0\}$ means considering all orthants that $S\setminus \{0\}$ intersects, that is,  $\Sigma(S \backslash\{0\})$, and then a vector $k$ such that $\diag(k) B^{tr} \diag(\tfrac{1}{x})u$ belongs to $\ker(N)$ exists if and only if $B^{tr} \diag(\tfrac{1}{x})u$ has the sign of some vector in $\ker(N)$. For details of this construction we refer the reader to \cite{MullerSigns}. }

When \blue{applying \eqref{eq:sign}} to a reaction network with mass-action kinetics, we consider $S=\im(N)$, and if  \eqref{eq:sign} holds, then the reaction network is said to be injective.  By verifying the sign equality \eqref{eq:sign} with $N$ the stoichiometric matrix, $S=\im(N)$, and $B$ the exponent matrix in \eqref{eq:ode2}, we can determine whether the sign of  the determinant $J_{\alpha,1}(x)$, and hence of the denominator of $\se_i'(0)$, is constant. We emphasize that we do not impose that $x$ is a steady state in the computations in this section. 

In order to study the numerator, \blue{we interpret it as the Jacobian of a function of the form $g_k(x)$ as above, with the  same coefficient matrix $N$, and suitable exponent matrix $B_j$ and vector space $S_j$. Afterwards we apply \eqref{eq:sign}. The specific form of these objects is given in the next proposition. }

\begin{proposition}\label{prop:signs} Assume mass-action kinetics. 
Consider the perturbation $\gamma_i$ and assume that $\im(N)^\perp$ contains vectors with non-zero $i$-th entry. Let 
$W$ be a matrix of conservation laws  such that the only row where the $i$-th component is non-zero is the first, where it takes the value $1$. Let $S_{j}\subseteq\R^{n-1}$ be the kernel of the vector subspace spanned by the rows of the matrix obtained from $W$, by deleting the first row and the $j$-th column, and let $B_j$ be obtained from $B$ by removing the $j$-th row.

Then the sign of the numerator of $\se_j'(0)$, as a function of $k$ and $x$, is independent of $k$ and $x$ if and only if 
\[\sigma(\ker(N))\cap\sigma(B_j^{tr}(\Sigma(S_{j}\backslash\{0\})))=\emptyset.\]
\end{proposition}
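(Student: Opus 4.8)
The plan is to express the numerator of $\se_j'(0)$ as the determinant of the Jacobian of a function of the form $g_k(x) = N\diag(k)x^{B_j}$ (after suitably permuting/identifying coordinates), and then invoke the criterion \eqref{eq:sign} verbatim. First I would fix the matrix of conservation laws $W$ to be of the special shape allowed by the hypothesis: the $i$-th column has its only nonzero entry equal to $1$ in the first row; this is legitimate because by the Lemma (sensitivities are independent of the basis of $\im(N)^\perp$) the quantity $\se_j'(0)$ is unchanged, and because $\im(N)^\perp$ has a vector with nonzero $i$-th entry such a $W$ exists (row-reduce so that the first row is the only one with a nonzero $i$-entry, then rescale that row).

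With this $W$, the independent term $J_{\gamma_0,2}(\se)\cdot\gamma_i'(0)$ in \eqref{eq:differentiate} is, as computed in Proposition~\ref{prop:form}, the vector with $-1$ in the $(n-d+1)$-th entry (the row corresponding to the first conservation law) and zeros elsewhere. By Cramer's rule, $\se_j'(0)$ is $(-1)$ times the $(n-d+1,j)$-cofactor of $J_{\gamma_0,1}(\se)$ divided by $\det J_{\gamma_0,1}(\se)$; concretely, the numerator of $\se_j'(0)$ is $\pm$ the minor of $J_{\gamma_0,1}(\se)$ obtained by deleting the $j$-th column and the $(n-d+1)$-th row. Now I would identify this minor with a determinant of a Jacobian: deleting the $(n-d+1)$-th row removes the first conservation-law equation, i.e. the equation involving $x_i$; deleting the $j$-th column removes the $\partial/\partial x_j$ derivative. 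The key observation is that in the remaining $(n-1)\times(n-1)$ matrix, the upper $n-d$ rows are the Jacobian (in the variables $x\neq x_j$) of $N'\diag(k)x^{B}$ with the $j$-th variable treated as eliminated via the first conservation law — but since the perturbation acts through that eliminated law, and because of the $x^B$ structure, the upper block is exactly $N'\diag(k)$ times the exponent matrix $B_j$ (i.e. $B$ with the $j$-th row removed) in the logarithmic-derivative form $J = N\diag(k)B_j^{tr}\diag(1/x)$; the lower $d-1$ rows are precisely the rows of $W$ with the first row and the $j$-th column deleted, whose kernel is $S_j$ by definition. So the minor is, up to the harmless positive factor $\prod 1/x_\ell$ and a sign, $\det J_{G_k}$ for the function $G_k$ built from $N$, exponent matrix $B_j$, and subspace $S_j$, exactly in the setup of the displayed criterion preceding the proposition.

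Finally I would apply that criterion directly: $\det J_{G_k}$ is a polynomial in $k,x$ all of whose coefficients have the same sign if and only if $\sigma(\ker(N))\cap\sigma(B_j^{tr}(\Sigma(S_j\setminus\{0\})))=\emptyset$; and "all coefficients of the same sign" is exactly the statement that the sign of the numerator of $\se_j'(0)$ is independent of $k$ and $x$ over $\R^n_{>0}$ (a homogeneous polynomial with mixed-sign coefficients takes both signs on the positive orthant, by looking at the Newton-polytope vertices, as recalled in Section~\ref{sec:sensitivities}). Note $\ker(N)=\ker(N')$ and the rank hypothesis needed for the criterion hold because $N$ has rank $n-d$ and $W$ is a basis of $\im(N)^\perp$, so the deletions produce a matrix of the right rank. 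The main obstacle — and the step I would write out most carefully — is the bookkeeping in the middle paragraph: verifying that after deleting the $j$-th column and the conservation-law row, the surviving $n\times n$-minus-one-row, $n$-minus-one-column block genuinely matches the Jacobian of $g_k$ with exponent matrix $B_j$ and that the lower block's kernel is $S_j$, i.e. that removing the $j$-th coordinate is compatible on both the "dynamics" rows and the "conservation" rows simultaneously. Everything else is a direct appeal to the Lemma, to Proposition~\ref{prop:form}, and to the cited result of \cite{MullerSigns}.
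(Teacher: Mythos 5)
Your proposal is correct and follows essentially the same route as the paper: use Proposition~\ref{prop:form} to identify the numerator of $\se_j'(0)$, up to sign, with the minor of $J_{\gamma_0,1}(\se)$ obtained by deleting the $j$-th column and the $(n-d+1)$-th row, recognize that minor as the Jacobian determinant of a map built from $N'$, the exponent matrix $B_j$ and the subspace $S_j$, and invoke the criterion \eqref{eq:sign} from \cite{MullerSigns}. The bookkeeping step you single out as delicate is resolved in the paper not by ``eliminating $x_j$ via the first conservation law'' but by absorbing the residual $x_j$-dependence of the monomials into new positive rate constants $\widehat{k}_\ell = k_\ell x_j^{y_j}$, after which the upper block is literally the Jacobian of $N'\diag(\widehat{k})\,\widehat{x}^{B_j}$ in the remaining variables — exactly the identification you assert.
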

\begin{proof}
Let $\widehat{x}=(x_1,\dots,x_{j-1},x_{j+1},\dots x_n)\in \R^{n-1}$ be the vector $x$ where the $j$-th entry is deleted.
Let $N'$ be a matrix formed by $n-d$ linearly independent rows of $N$ (such that $\ker(N)=\ker(N')$), and for $\ell=1,\dots,r$, let 
$\widehat{k}_\ell = k_\ell x_j^{y_j}$ if the reactant of the $\ell$-th reaction is $y$. 

With the choice of $W$ and the considerations before Lemma~\ref{corollary}, the numerator of $\se_j'(0)$ is, up to a constant sign, the determinant of the submatrix $J'$ of $ J_{\alpha,1}(\se)$ obtained by removing the $j$-th column and the $(n-d+1)$-th row. 
This matrix $J'$ agrees with the Jacobian of the function $G_k(\widehat{x})$ in $\R^{n-1}$ with the first $n-d$ entries equal to $N' \diag\big(\, \widehat{k}\, \big) \widehat{x}^{B_j}$ and bottom $d-1$ entries $W' \widehat{x}^{tr}$, where $W'$ is the matrix obtained from $W$, by deleting the first row and the $j$-th column.

As recalled in \eqref{eq:sign}, by \cite{MullerSigns}, the sign of the determinant of $J'$ does not depend on $\widehat{k}$ nor $\widehat{x}$, hence on $k$ nor $x$, if and only if the sign condition in the statement holds. 
\end{proof}

To illustrate this result, we consider network \eqref{network:phosphotransfer}, the perturbation of $x_2^0$ by adding $s$, and focus on $\se_1'(0)$. By Table~\ref{table:signs}, we already know that the sign of $\se_1'(0)$ is $+$, and this holds for any $x$ without imposing the steady state condition. In particular, in the notation of Proposition~\ref{prop:signs}, $i=2$ and $j=1$.

The kernel of $N$ in \eqref{eq:N} is generated by the vectors $(1,1,0,0)$ and $(0,0,1,1)$, and hence for any $u=(u_1,u_2,u_3,u_4)$ in $\ker(N)$, the sign of $u_1$ and $u_2$ agree, and the sign of $u_3$ and $u_4$ agree.
 The matrix $B_1$  obtained by removing the first row of $B$ in \eqref{eq:B} and a matrix of conservation laws satisfying the hypothesis of Proposition~\ref{prop:signs} are given as
 $$
B_1=\left(\begin{array}{cccc}
0&1&1&0\\
0&0&0&1\\
0&1&0&1    \\
1 & 0 & 1 & 0
\end{array}\right),\qquad
W=\left(\begin{array}{ccccc} 1&1&1&0&0\\
-1&0&1&0&1\\
0&0&0&1&1\end{array}\right).$$
Hence $S_1=\ker\left(\begin{array}{ccccc}  
 0&1&0&1\\
 0&0&1&1\end{array}\right)$ is generated by 
$(1,0,0,0)$ and $(0,-1,-1,1)$. In particular for any vector $(a,b,c,d)$ in $\Sigma(S_{1}\backslash\{0\})$,   at least one entry is non-zero, the signs of $b$ and $c$ agree and are opposite to the sign of $d$, unless $b,c,d$ are zero. 
Now, $B_1^{tr}$ times $(a,b,c,d)^{tr}$ is the vector
 $u=(d,a+c,a+d,b+c)^{tr}$. If $d$ is positive, then $b$, $c$ and $b+c$ are negative, and for $u$ to have the sign of a vector in $\ker(N)$, it is necessary that $a+d$ is negative (hence $a$ negative) and $a+c$ is positive (hence $a$ positive), a contradiction. Similarly, we argue that if $d$ is negative, then $u$ does not have the sign of any vector in $\ker(N)$. Finally, if $d$ is zero, then so are $b,c$, and hence $u=(0,a,a,0)$, which has the sign of a vector in $\ker(N)$ only if $a=0$, a contradiction. 
 
 We have therefore verified that the sign condition in Proposition~\ref{prop:signs} holds, and therefore $\se_1'(0)$ does not depend neither on $k$ nor on $x$.
  Clearly, finding the sign vectors manually is not optimal at all. In \cite{MullerSigns}, see also \cite{dickenstein:messi}, strategies to verify whether this equality holds are presented.

\section{Stable vs unstable steady states}\label{hybridhistine kinase:section}
In the previous sections we have not taken into consideration whether the steady states are asymptotically stable or unstable. In practice, in an experimental setting, only stable steady states are observable. Although it is often not possible to restrict parametrizations of the set of steady states to only stable steady states, some relevant information can be extracted from the sign of the determinant of the Jacobian $J_{\alpha,1}(x)$. 

Specifically, assume the function $F_{\alpha}(x)$ is constructed from a matrix of conservation laws $W$ that is row reduced, and let $i_1,\dots,i_d$ be the indices of the first non-zero entries of the rows of $W$.  The first $n-d$ entries of $F_\alpha(x)$ can be chosen to be the entries of $f_k(x)$ with index different from $i_1,\dots,i_d$. 
Let $\tau=\sum_{\ell=1}^d (n-\ell  - i_\ell)$ 
be the sign of the permutation that reorders the entries of $F_\alpha(x)$ such that the entries defined by $W$ are at entries $i_1,\dots,i_d$. 
Then, by \cite[Prop. 5.3]{wiuf-feliu},   the determinant of $J_{\alpha,1}(x)$ is $(-1)^\tau$ times the product of the $n-d$ nonzero eigenvalues of the Jacobian of $f_k$ evaluated at $x$. Hence, if the steady state is hyperbolic and asymptotically stable, then necessarily the sign of this determinant  is $(-1)^{\tau + n-d}$.

In the previous examples, the determinant of the Jacobian $J_{\alpha,1}(x)$ at a steady state had a constant sign, which was actually $(-1)^{\tau+n-d}$, and hence in accordance with stability. We now illustrate by means of an example what can be said about sensitivities when the network has unstable steady states. 

For that, we consider a simple model of a hybrid histidine kinase from \cite{feliu:unlimited}. The network is depicted in Figure~\ref{fig:HK}(a). Under mass-action kinetics, there exist stoichiometric compatibility classes for which this network has three positive steady states  \cite{feliu:unlimited}, two of which are asymptotically stable \cite{torres:stability}. 
Further, by \cite{FeliuPlos}, the network admits three positive steady states in some stoichiometric compatibility class if and only if $k_3>k_1$. If $k_1\geq k_3$, then the network has exactly one positive steady state in each stoichiometric compatibility class.

\begin{figure}[t]
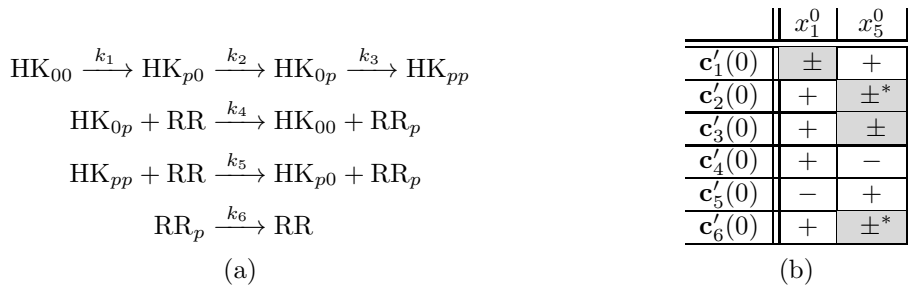

\begin{minipage}[b]{0.45\textwidth}
\begin{center}
\begin{align*}
{\rm HK}_{00}   \ce{->[k_1]} {\rm HK}_{p0} & \ce{->[k_2]} {\rm HK}_{0p} \ce{->[k_3]} {\rm HK}_{pp} 
\\
{\rm HK}_{0p} +{\rm RR} & \ce{->[k_4]} {\rm HK}_{00} +{\rm RR}_p \\
{\rm HK}_{pp} +{\rm RR} &  \ce{->[k_5] }{\rm HK}_{p0} +{\rm RR}_p\\ 
{\rm RR}_p & \ce{->[k_6]} {\rm RR}
\end{align*}

(a)
\end{center}
\end{minipage}
\begin{minipage}[b]{0.45\textwidth}
\begin{center}
\begin{tabular}
{c||c|c|c|c|c|c|c|c|c|c|}
 &   $x_1^0$ & $x_5^0$ \\\hline\hline
$\se'_1(0)$ &   \cellcolor{gray!30!white} $\pm$   & $+$  \\ \hline
$\se'_2(0)$ &   $+$ &   \cellcolor{gray!30!white} $\pm^*$   \\ \hline
$\se'_3(0)$ & $+$  & \cellcolor{gray!30!white} $\pm$   \\ \hline
$\se'_4(0)$ &  $+$ &    $-$  \\ \hline
$\se'_5(0)$ &   $-$ &   $+$ \\ \hline
$\se'_6(0)$ &   $+$ &   \cellcolor{gray!30!white} $\pm^*$ \\ \hline
\end{tabular}

\medskip
(b)
\end{center}
\end{minipage}
\caption{(a) A simple network of a hybrid histidine kinase, taken from \cite{feliu:unlimited}. (b) Sign-sensitivities with respect to an increase of $x_1^0$ and $x_5^0$. $\pm^*$ means that the sign is $+$ when $k_1\geq k_3$, that is, when the network has exactly one positive steady state. }\label{fig:HK}
\end{figure}

We order the species as HK, HK$_{p0}$,  HK$_{0p}$, HK$_{pp}$, RR and RR$_{p}$, and let $x_1,\dots,x_6$ denote  their concentrations respectively. Following \cite{FeliuPlos}, the set of positive steady states admits a parametrization in terms of $x_1$ and $x_5$ obtained by solving the steady states equations of $x_2,x_3,x_4,x_6$ in these variables:
\[x_{2}=\frac {k_{1}x_{1}( k_{4}x_{5}+k_{3}) }{k_{2}k_{4}x_{5}}, \quad 
x_{3}=\frac {k_{1}x_{1}}{k_{4}x_{5}},  \quad x_{4}=\frac{k_1k_3x_{1}}{k_{4}k_{5}x_{5}^{2}}, \quad 
x_{6}=\frac {k_{1}x_{1} ( k_{4}x_{5}+k_{3}) }{k_{4}k_6x_{5}}.\]
We choose the matrix of conservation laws 
$$ W=\begin{pmatrix} 1& 1 & 1 & 1 & 0 & 0 \\ 0 & 0 & 1 & 1 & 1 & 1 \end{pmatrix},$$
and construct $F_{k,x^0}(x)$ with first four components equal to $f_2,f_3,f_4,f_6$.
The determinant of $J_{\alpha,1}(x)$ evaluated at the parametrization yields:
\begin{align*}
\det J_{\alpha,1}(x_1,x_5) &= -(k_{1}-k_{3})k_{1}k_{2}k_{5}x_{1} 
-(k_{1}+k_2)k_{4}k_{5}k_{6}x_{5}^{2}
-k_{1}(k_{2}+k_3)k_{5}k_{6}x_{5}-k_{1}k_{2}k_{3}k_{6} \\ & \qquad -\frac{2\,k_{1}^{2}k_{2}k_{3}x_{1}}{x_{5}}-\frac {k_{1}^{2}k_{2}k_{3}^{2}x_{1}}{x_{5}^{2}k_{4}}.
\end{align*}
Here we see that if $k_1\geq k_3$, then this determinant has negative sign, which is actually the sign it attains when the steady state is asymptotically stable and hyperbolic. Indeed, in this case $n-d=4$ and $(-1)^\tau=(-1)^{n-1-1 + n-2-5} = (-1)^5=-1$.
If $k_3>k_1$, then the stable steady states will necessarily satisfy that the sign of $\det J_{\alpha,1}(x_1,x_5)$ is negative.
Using this, we proceed as above to compute the sign of the sensitivities with respect to adding a small amount to each of $x_i^0$. By Lemma~\ref{corollary}, it is enough to compute the sensitivities with respect to perturbing $x_1^0$ (which agrees with the perturbations with respect to $x_2^0$, $x_3^0$ and $x_4^0$) and $x_5^0$ (which agrees with $x_6^0$).

Figure~\ref{fig:HK}(b) shows the obtained sign-sensitivities under the assumption that $\det J_{\alpha,1}(x_1,x_5) $ is  negative. If this determinant is positive, then all signs are reversed, but this implies that the steady state is unstable. 

An apparently surprising property of this network is that the addition of $\HK_{00}$, that is, $x_1^0$, might lead to the decrease of $\HK_{00}$.
To have a closer inspection at this phenomenon, using the parametrization, we have that $\se_1'(0)$ at the steady state defined by $x_1, x_5$ is
$$\se_1'(0)= k_2k_5(k_1k_3x_1 - k_4k_6 x_5^2)  \, / \, \det J_{\alpha,1}(x_1,x_5). 
$$
By letting $k_1=\dots =k_6=1$, the system has exactly one positive steady state in each stoichiometric compatibility classes and $\det J_{\alpha,1}(x_1,x_5)<0$. 
The $x_1$-component of the steady state defined by $x_1=2,x_5=1$ will decrease after a small amount of $x_1$ is added to the system. On the other hand, the $x_1$-component of the steady state defined by $x_1=1,x_5=2$ will increase.

\medskip
\textbf{Two-site sequential and distributive phosphorylation cycle. } 
We conclude with one extra example where we analyze sign-sensitivities of a classical model. We consider 
the reaction network in which a substrate $S$ becomes doubly phosphorylated by a kinase $E$ and dephosphorylated by a phosphatase $F$. We let $S_0,S_1,S_2$ be the three phosphoforms of $S$ with $0,1,2$ phosphorylated sites, respectively.
This gives rise to the following reactions \cite{Wang:2008dc,conradi-mincheva}:
\begin{align}\label{eq:network}
\begin{split}
S_0 + E \ce{<=>[k_1][k_2]} ES_0 \ce{->[k_3]} S_1+E   \ce{<=>[k_7][k_8]} ES_1 \ce{->[k_9]} S_2+E \\
S_2 + F  \ce{<=>[k_{10}][k_{11}]} FS_2 \ce{->[k_{12}]} S_1+F  \ce{<=>[k_4][k_5]} FS_1 \ce{->[k_6]} S_0+F.
\end{split}
\end{align}
We order the species as $E,F,S_0,S_1,S_2,ES_0,FS_1,ES_1,FS_2$ and let $x_1,\dots,x_9$ denote their concentrations respectively.
A matrix of conservation laws is
$$ W = \begin{pmatrix}
1 & 0 & 0 & 0 & 0 & 1 & 0 & 1 & 0 \\
0 & 1 & 0 & 0 & 0 & 0 & 1 & 0 & 1 \\
0 & 0 & 1 & 1 & 1 & 1 & 1 & 1 & 1
\end{pmatrix}$$
The set of positive steady states admits a positive parametrization in $x_1,x_2,x_3$, obtained by solving the system $f_4=\dots=f_9=0$ in $x_4,\dots,x_9$, where $f_i$ is the mass-action evolution equation for $x_i$.  It is well known that this network admits between one and three positive steady states in each stoichiometric compatibility class, e.g. \cite{Wang:2008dc,conradi-mincheva}. 

We consider $\det J_{\alpha,1}(x)$ evaluated in the parametrization, and assume it is negative: namely this is the sign of this determinant when the steady state is asymptotically stable and hyperbolic. 
Under this assumption, we compute the sign-sensitivities $\se_1'(0),\dots,\se_5'(0)$ with respect to a small increase of $x_1^0$, $x_2^0$ and $x_3^0$, and obtain that none of them is given by a rational function with numerator of fix sign, indicating that all signs might be possible for this system. However, it is not straightforward to analyze the sign of the numerators while imposing that  $\det J_{\alpha,1}(x)$ is negative. 

Nevertheless, a few cases have a nice and simple form. Specifically:
\begin{itemize}
\item 
With respect to adding $x_3^0$, that is, adding substrate $S_0$, we have that  $\se_1'(0)>0$, $\se_2'(0)>0$ and $\se_4'(0)<0$ if  $k_3k_{12}\geq k_6k_9$. 
\item With respect to adding $x_2^0$, that is, adding phosphatase $F$, we obtain that 
 $\se_3'(0)<0$, if $k_3k_{12}\leq k_6k_9$, and $\se_5'(0)>0$ if   $k_3k_{12}\geq k_6k_9$.
\item 
Symmetrically, with respect to $x_1^0$, that is, adding kinase $E$, then $\se_3'(0)>0$, if $k_3k_{12}\leq k_6k_9$, and $\se_5'(0)<0$ if   $k_3k_{12}\geq k_6k_9$.
\end{itemize}

\section*{Acknowledgements }
This work has been partially supported by the Independent Research Fund of Denmark.  Janne Kool is thanked for her input and development of the ideas presented in this manuscript, and specially for pointing out the problem with perturbations with respect to total amounts.  \blue{Beatriz Pascual Escudero is thanked for comments on a preliminary version of this manuscript, and in particular for suggesting to include Proposition~\ref{prop:form}. }
%

\providecommand{\href}[2]{#2}
\providecommand{\arxiv}[1]{\href{http://arxiv.org/abs/#1}{arXiv:#1}}
\providecommand{\url}[1]{\texttt{#1}}
\providecommand{\urlprefix}{URL }

\end{document}